\newcommand\Set[2]{\Bigl\{\,#1\mid#2\,\Bigl\}}
\DeclarePairedDelimiterX{\Iintv}[1]{\llbracket}{\rrbracket}{\iintvargs{#1}}
\NewDocumentCommand{\iintvargs}{>{\SplitArgument{1}{,}}m}
{\iintvargsaux#1} 
\NewDocumentCommand{\iintvargsaux}{mm} {#1\mkern1.5mu,\mkern1.5mu#2}
\theoremstyle{theorem}
\newtheorem{theorem}{Theorem}[section]
\newtheorem{corollary}{Corollary}[theorem]
\newtheorem{lemma}[theorem]{Lemma}
\theoremstyle{definition}
\newtheorem{definition}{Definition}[section]
\theoremstyle{interpretation}
\newtheorem{interpretation}{Interpretation}[section]
\newtheorem*{remark}{Remark}
\title{Computing the Vapnik Chervonenkis Dimension for Non-Discrete Settings }
\author{%
  Mohamed MOUHAJIR\thanks{These authors contributed equally to this research.} \\
  ENSIAS, Mohammed V University \\
  Morocco, Rabat 10000 \\
  \texttt{mohamed\_mouhajir@um5.ac.ma}
  \And
  Mohammed NECHBA\footnotemark[1] \\
  ENSIAS, Mohammed V University \\
  Morocco, Rabat 10000 \\
  \texttt{mohammed\_nechba@um5.ac.ma}
  \And
  Yassine SEDJARI\footnotemark[2] \\
  ENSIAS, Mohammed V University \\
  Morocco, Rabat 10000 \\
  \texttt{yassine\_sedjari@um5.ac.ma}
}
\begin{document}
\maketitle

\begin{abstract}

In 1984, Valiant \cite{valiant1984theory} introduced the Probably Approximately Correct (PAC) learning framework for boolean function classes.  Blumer et al. \cite{Blumer89} extended this model in 1989 by introducing the VC dimension as a tool to characterize the learnability of PAC. The VC dimension was based on the work of Vapnik and Chervonenkis in 1971 \cite{Vapnik68}, who introduced a tool called the growth function to characterize the shattering property. Researchers have since determined the VC dimension for specific classes, and efforts have been made to develop an algorithm that can calculate the VC dimension for any concept class. In 1991, Linial, Mansour, and Rivest \cite{linial1991results} presented an algorithm for computing the VC dimension in the discrete setting, assuming that both the concept class and domain set were finite. However, no attempts had been made to design an algorithm that could compute the VC dimension in the general setting.Therefore, our work focuses on developing a method to approximately compute the VC dimension without constraints on the concept classes or their domain set. Our approach is based on our finding that the Empirical Risk Minimization (ERM) learning paradigm can be used as a new tool to characterize the shattering property of a concept class.

\end{abstract}

\section{Introduction}

One critical facet of Statistical Learning Theory is to prove the learnability of some concept classes within the PAC (Probably Approximately Correct) learning framework, which was introduced by Valiant in 1984 \cite{valiant1984theory}. To this end, Blumer et al. extended in 1989 \cite{Blumer89} Valiant's work by coining the notion of Vapnik-Chervonenkis (VC) dimension as a tool that characterizes the PAC learnability of binary-valued concept classes. This notion of VC dimension was buid upon  the work of Vapnik and Chervonenkis in 1971 \cite{Vapnik68} , who introduced a tool called the growth function to characterize the shattering property. The main result showed by Blumer et al. \cite{Blumer89} was that finiteness of VC dimension is a necessary and sufficient condition to say that a concept class is PAC learnable.

     As time passed, researchers directed their interest towards calculating the VC dimension for some specific cases. In this context, the work of Dudley et al. in 1981 \cite{WENOCU80} has laid the foundation ground of what was called lately by Ben-David \& Shalev-Shwartz ,2014 \cite{David14} “Dudley Classes”. Due to this notion, the VC dimension of many classes was easily calculated. Examples of such classes include: the class of Half-Spaces, the class of Open Balls, and the class of polynomials that are of degree less than some given integer, and so on. Subsequently, attention has shifted towards designing an algorithm that could compute the VC dimension of any concept class $\mathcal{H}$. In this scope, the only formal initiative to develop such an algorithm, to our knowledge, was taken by Linial et al. in 1991 \cite{linial1991results}, when they constructed an algorithm to calculate the VC dimension in what they referred to as the 'Discrete VC problem'. Yet, two key assumptions underpinned their approach: that the concept class and domain set are both finite. Meaning that we can present the concept class as $ \mathcal{H} = \{ h_1, ..., h_l \}$ , and the domain set as $\mathcal{X}= \{ x_1, ..., x_m \}$.
    
  Then, according to this modeling, the concept class and the domain set are presented as a $0-1$ valued matrix $M$. Each row $i$, represents a concept $h_i$ in $\mathcal{H}$, and each column $j$ represent a domain point $x_j$. Then, each element $M_{i,j}$ in this matrix is filled as following:
  \begin{itemize}
      \item If the $x_j$ belongs to $h_i$ , then $M_{i,j}= 1$.
\item Otherwise , $M_{i,j}= 0$.

  \end{itemize}

Therefore, in this context, computing the VC dimension of a concept class $\mathcal{H}$ becomes the problem of computing the VC dimension of the above matrix $M$ according to the following technique: To prove that the VC dimension is less than d, we should take all combinations of columns $\binom{m}{d}$. Each combination represents a set upon which we will assess the shattering property. Thus, if we find that $\mathcal{H}$ doesn’t shatter any of them , we say that $VC dim (\mathcal{H}) < d$ .

However, it is evident that this approach lacks generality. In fact, the most practical concept classes are infinite. Besides, if we assume the finiteness of the concept class, then computing the VC dimension of this concept class is not necessary to prove that it is PAC learnable. As stated by Ben-David and Shalev-Shwartz (2014)\cite{David14} , the sample complexity required to ensure PAC learnability is directly proportional to the size of the concept class: $m(\epsilon,\delta) \leq \frac{\ln{(|\mathcal{H}|} / \delta)}{\epsilon}$.

 In this work, our aim is to address this gap by proposing a new algorithm that can compute the VC dimension of a concept class without requiring the finiteness of either the class or its domain set. This approach has the potential to significantly expand the scope of applicability of the VC dimension concept, making it relevant for a wider range of real-world scenarios.

\begin{remark}
In this article, we will mainly use the common terminology and notation as in the textbook \cite{David14}. 
We will use the term shattering to refers to Vapnik-Chervonenkis shattering (since there are more sophisticated approaches of defining the shattering property which are used for mutliclass problems such us : Natarajan shattering and DS shattering \cite{Brukhim22})
\end{remark}

\subsection{Results}

Our main result is the characterization of the shattering property using the ERM (Empirical Risk Minimization) learning paradigm. Specifically, we have derived a theorem and a corollary that provide an algorithmic way to assess the shattering property of any hypothesis class.This result has significant practical benefits, allowing us to efficiently evaluate the shattering property of any given class:

\begin{theorem}[Nechba-Mouhajir theorem of shattering]
Let $\mathcal{H}$ be a class of functions from $\mathcal{X}$ to $\{0, 1\}$ . Let $\mathcal{C} =\{ c_1,...,c_d\} \subset \mathcal{X}$. Let $(y_1,...,y_d) \in \{0,1\}^d$ a fixed $d$-tuple ,for some integer $d$.\\
There exists a function $h^* \in \mathcal{H}$ such that $h^*(\mathcal{C}) = (y_1,...,y_d)$ if and only if the $ERM_\mathcal{H}$ over $\mathcal{S}=\{(c_1,y_1),...,(c_d,y_d)\}$ (where $ERM_\mathcal{H}(\mathcal{S}) = \underset{{h\in \mathcal{H}}}{\arg\min(L_{\mathcal{S}} (h))}$ ) gives zero empirical loss  ( i.e : $L_{\mathcal{S}}(ERM_{\mathcal{H}}(S))=0$ ).
\end{theorem}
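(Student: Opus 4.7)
The plan is to unpack the definitions of the empirical loss and of the ERM rule, and then establish both implications of the biconditional by purely definitional reasoning. The key observation I would make first is that for the binary 0--1 loss on $\mathcal{S}=\{(c_1,y_1),\dots,(c_d,y_d)\}$ we have
\[
L_{\mathcal{S}}(h)=\frac{1}{d}\sum_{i=1}^{d}\mathbb{1}[h(c_i)\neq y_i]\ \geq\ 0,
\]
with equality if and only if $h(c_i)=y_i$ for every $i\in\{1,\dots,d\}$. This reduces both directions of the theorem to the same pointwise-agreement statement.

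For the forward direction, I would assume the existence of $h^{*}\in\mathcal{H}$ with $h^{*}(\mathcal{C})=(y_1,\dots,y_d)$ and then plug $h^{*}$ into the formula above to obtain $L_{\mathcal{S}}(h^{*})=0$. Because the empirical loss is nonnegative, $h^{*}$ attains the minimum of $L_{\mathcal{S}}$ over $\mathcal{H}$, so $\min_{h\in\mathcal{H}}L_{\mathcal{S}}(h)=0$, and any minimizer returned by $ERM_{\mathcal{H}}(\mathcal{S})$ must therefore also satisfy $L_{\mathcal{S}}(ERM_{\mathcal{H}}(\mathcal{S}))=0$. This establishes one implication without needing any structural assumptions on $\mathcal{H}$.

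For the converse, I would take the hypothesis $\hat{h}=ERM_{\mathcal{H}}(\mathcal{S})$ with $L_{\mathcal{S}}(\hat{h})=0$ as given. Applying the equivalence from the first step in the opposite direction, the condition $L_{\mathcal{S}}(\hat{h})=0$ forces $\hat{h}(c_i)=y_i$ for all $i$, so $\hat{h}$ itself is a witness function $h^{*}\in\mathcal{H}$ with $h^{*}(\mathcal{C})=(y_1,\dots,y_d)$. Thus the two sides of the equivalence collapse to the same statement once the 0--1 loss is rewritten.

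The only potential subtlety, and the one point I would be careful about, is the well-definedness of $\arg\min$ when $\mathcal{H}$ is infinite: the $\arg\min$ may fail to be attained. However, the forward direction constructs an explicit minimizer $h^{*}$, so the minimum is actually attained whenever a consistent hypothesis exists; and the reverse direction simply takes the output of $ERM_{\mathcal{H}}$ as postulated. So no serious obstacle arises, and the proof is essentially a careful unfolding of notation rather than a nontrivial argument.
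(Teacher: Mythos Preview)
Your proposal is correct and follows essentially the same approach as the paper: both directions are obtained by unfolding the definition of the empirical 0--1 loss and using nonnegativity of the summands. If anything, your forward direction is slightly cleaner than the paper's, since you argue that $\min_{h\in\mathcal{H}}L_{\mathcal{S}}(h)=0$ and hence any ERM output has zero loss, whereas the paper writes $ERM_{\mathcal{H}}(\mathcal{S})=h^{*}$ as though the minimizer were unique.
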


\begin{corollary}
Let $\mathcal{H}$ be a hypothesis class for binary classification. Let $\mathcal{C}=\{c_1,...,c_d\}\subset \mathcal{X}$.\\
$\mathcal{H}$ shatters $\mathcal{C}$ if and only  if for every $d$-tuple in $\{0,1\}^{d}$ (i.e $\forall (y_1,...,y_d)\in \{0,1\}^d $) , the $ERM_\mathcal{H}$ learner over $\mathcal{S}=\{(c_1,y_1),...,(c_d,y_d)\}$ will give a zero empirical loss.
\end{corollary}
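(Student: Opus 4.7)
The plan is to deduce the corollary as a direct logical consequence of the preceding Nechba-Mouhajir theorem by quantifying its biconditional over all possible labelings. First I would unpack the definition of shattering: $\mathcal{H}$ shatters $\mathcal{C}=\{c_1,\dots,c_d\}$ exactly when for every labeling $(y_1,\dots,y_d)\in\{0,1\}^d$ there exists some $h^*\in\mathcal{H}$ with $h^*(c_i)=y_i$ for all $i\in\Iintv{1,d}$. This reformulation puts the hypothesis of the corollary into precisely the shape to which the theorem applies pointwise.

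Next I would invoke the theorem once for each of the $2^d$ labelings. The theorem asserts, for a fixed $d$-tuple, the equivalence between the existence of a realizing $h^*$ and the fact that $L_{\mathcal{S}}(ERM_{\mathcal{H}}(\mathcal{S}))=0$ on the sample $\mathcal{S}=\{(c_1,y_1),\dots,(c_d,y_d)\}$. Universally quantifying both sides of this equivalence over $(y_1,\dots,y_d)\in\{0,1\}^d$ preserves the biconditional, and the resulting statement is exactly the corollary. Concretely, forward direction: if $\mathcal{H}$ shatters $\mathcal{C}$, pick any labeling, produce a realizing $h^*$, and apply the theorem to conclude the ERM has zero empirical loss. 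Reverse direction: if the ERM has zero empirical loss for every labeling, the theorem supplies a realizing $h^*$ for each, so $\mathcal{H}$ shatters $\mathcal{C}$.

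Since the corollary is essentially a quantified restatement of the theorem, there is no real obstacle and no new machinery is required. The only subtle point worth a sentence of care is the interpretation of $ERM_{\mathcal{H}}(\mathcal{S})$: because the theorem's conclusion $L_{\mathcal{S}}(ERM_{\mathcal{H}}(\mathcal{S}))=0$ means the minimum empirical loss over $\mathcal{H}$ equals zero, any argmin selection yields a hypothesis matching the labels, so the choice of tie-breaking in the $\arg\min$ is immaterial. With that clarification the proof reduces to a single ``apply Theorem~1.1 for every $(y_1,\dots,y_d)\in\{0,1\}^d$'' step.
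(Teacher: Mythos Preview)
Your proposal is correct and matches the paper's own proof essentially line for line: the paper also unpacks the definition of shattering as ``for every $(y_1,\dots,y_d)\in\{0,1\}^d$ there exists $h^*\in\mathcal{H}$ with $h^*(\mathcal{C})=(y_1,\dots,y_d)$'' and then applies the theorem pointwise to each labeling to obtain the biconditional. Your remark about tie-breaking in the $\arg\min$ is a nice clarification that the paper leaves implicit, but the overall argument is identical.
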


\section{ERM as a new characterization of the shattering property}

\subsection{The Problem of Shattering : Definitions and Properties }
Let $\mathcal{H}$ be a hypothesis class for  binary classification task :

\begin{align*}
\mathcal{H} &\coloneqq \Set{\emph{h}}{\operatorname{\emph{h}}\colon \mathcal{X}\to  \mathcal{Y}= \{0,1\},\quad x \mapsto \emph{h}(x)} 
\end{align*}

\begin{definition}[The restriction of $\mathcal{H}$ over a domain $\mathcal{C}$]
Let $\mathcal{H}$ be a class of functions from $\mathcal{X}$
to $\{0,1\}$ and let $\mathcal{C} =\{ c_1,...,c_d\} \subset \mathcal{X}$. The restriction of $\mathcal{H}$ to $\mathcal{C}$ is the set of functions from $\mathcal{C}$ to $\{0,1\}$ that can be derived from $\mathcal{H}$. That is,
\begin{align*}
    \mathcal{H}_c=\Set{ (\emph{h}(c_1),..., \emph{h}(c_d))}{ \emph{h} \in \mathcal{H}}
\end{align*}
where we represent each function from $\mathcal{C}$ to $\{0,1\}$ as a vector in $\{0,1\}^{|\mathcal{C}|}$
\end{definition}

\begin{definition}[Shattering]
A hypothesis class $\mathcal{H}$ of functions from $\mathcal{X}$ to $\{0, 1\}$ shatters a finite set $\mathcal{C} \subset X$ if the restriction of $\mathcal{H}$ to $\mathcal{C}$ is the set of all functions from $\mathcal{C}$ to $\{0,1\}$. That is, $ |\mathcal{H}_c|= 2^{|\mathcal{C}|}$.
\end{definition}

\begin{interpretation}
To prove that a given hypothesis class $\mathcal{H}$ shatters a given domain set $\mathcal{C}\subset \mathcal{X}$, we should perform two steps:
\begin{enumerate}
 
    \item $\Rightarrow |$ 
  We should prove that $\mathcal{H}_c \subset \{0,1\}^{|\mathcal{C}|}$ (which is trivial since $\mathcal{Y}=\{0,1\})$
  
    \item $\Leftarrow |$ (see Figure\ref{fig:explicative_diagram} for more intuition) We should prove that $\{0,1\}^{|\mathcal{C}|} \subset \mathcal{H}_c$. In other words, we should prove that for every $|\mathcal{C}|$-tuple of 0 and 1, there exist a hypothesis $h^*$ such that applying $h^*$ over $\mathcal{C}$ will give as this $|\mathcal{C}|$-tuple. Formally :
    \begin{align*}
          \{0,1\}^{|\mathcal{C}|} \subset \mathcal{H}_c \Longleftrightarrow \forall \;(y_1,...,y_{|\mathcal{C}|}) \in \{0,1\}^{|\mathcal{C}|},\; \exists \emph{h}^* \in  \mathcal{H}: \emph{h}^*(\mathcal{C}) &= \emph{h}^*((c_1,...,c_{|\mathcal{C}|}))\\
        &= (\emph{h}^*(c_1),...,\emph{h}^*(c_{|\mathcal{C}|}))\\
        &= (y_1,...,y_{|\mathcal{C}|})
    \end{align*}
\end{enumerate}

\begin{figure}[htp]
\centering
\begin{tikzpicture}[ele/.style={fill=black,circle,minimum width=.8pt,inner sep=0.5pt},every fit/.style={ellipse,draw,inner sep=-2pt}]

    \node at (0,5) {$\mathcal{H}_\mathcal{C}$};
    \node at (6,5) {$\{0,1\}^{|\mathcal{C}|}$};

 
  \node[ele,label=$h_1^*(\mathcal{C})$] (a1) at (0,4) {};    
  \node[ele,label] (a2) at (0,3.5) {};    
  \node[ele,label] (a3) at (0,3.2) {};   
  \node[ele,label] (a4) at (0,2.9) {};   
  \node[ele,label=$h_n^*(\mathcal{C})$] (a5) at (0,2) {};
  \node[ele,label] (a6) at (0,1.5) {};
   \node[ele,label] (a7) at (0,1.2) {};
    \node[ele,label] (a8) at (0,0) {};

  \node[ele,,label=\text{$(y_{1},...,y_{|\mathcal{C}|})$}] (b1) at (6,4) {};
  \node[ele,,label] (b2) at (6,3) {};
  \node[ele,,label] (b3) at (6,2.5) {};
  \node[ele,,label] (b4) at (6,2) {};
  \node[ele,,label=right:\text{$(y_{1}^{,},...,y_{|\mathcal{C}|}^{,})$}] (b5) at (6,0) {};

  \node[draw,fit= (a1) (a2) (a7) (a8),minimum width=3.7cm] {} ;
  \node[draw,fit= (b1) (b2) (b4) (b5),minimum width=3.7cm] {} ;  
  \draw[->,thick,shorten <=2pt,shorten >=2pt] (b5) -- (a1);
  \draw[->,thick,shorten <=2pt,shorten >=2] (b1) -- (a5);

\end{tikzpicture}
\caption{Explicating diagram of $\{0,1\}^{|C|} \subset \mathcal{H}_{\mathcal{C}} $  }

\label{fig:explicative_diagram}
\end{figure}
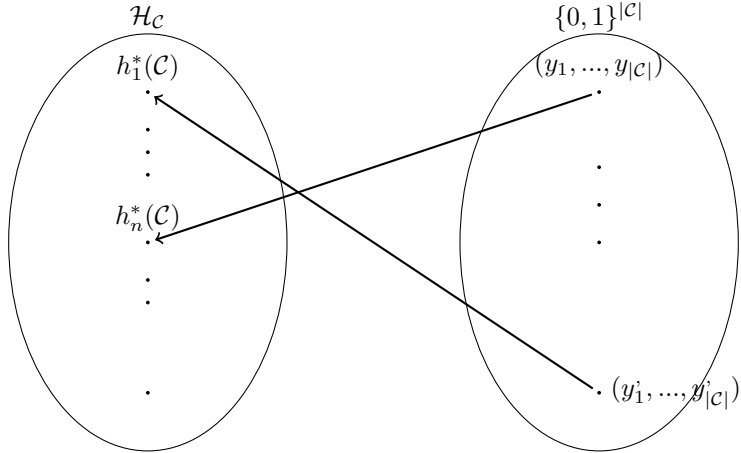

\end{interpretation}

Now , we will present our first contribution in this paper which this the following theorem of shattering :

\begin{theorem}[Nechba-Mouhajir theorem of shattering]
Let $\mathcal{H}$ be a class of functions from $\mathcal{X}$ to $\{0, 1\}$ . Let $\mathcal{C} =\{ c_1,...,c_d\} \subset \mathcal{X}$. Let $(y_1,...,y_d) \in \{0,1\}^d$ a fixed $d$-tuple, for some integer $d$.\\
There exists a function $h^* \in \mathcal{H}$ such that $h^*(\mathcal{C}) = (y_1,...,y_d)$ if and only if the $ERM_\mathcal{H}$ over $\mathcal{S}=\{(c_1,y_1),...,(c_d,y_d)\}$ (where $ERM_\mathcal{H}(\mathcal{S}) = \underset{{h\in \mathcal{H}}}{\arg\min(L_{\mathcal{S}} (h))}$ ) gives zero empirical loss  ( i.e : $L_{\mathcal{S}}(ERM_{\mathcal{H}}(S))=0$ ) .
\end{theorem}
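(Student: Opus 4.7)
The plan is to prove both directions of the equivalence directly from the definitions, since the statement is essentially a reformulation of what ``zero empirical loss'' means in the $0$-$1$ classification setting. The key observation is that for the sample $\mathcal{S}=\{(c_1,y_1),\ldots,(c_d,y_d)\}$, the empirical loss of any hypothesis $h$ is
\[
L_{\mathcal{S}}(h) = \frac{1}{d}\sum_{i=1}^{d}\mathbb{1}[h(c_i)\neq y_i],
\]
which is a non-negative quantity that equals zero \emph{iff} $h(c_i)=y_i$ for every $i\in\{1,\ldots,d\}$.

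For the forward direction ($\Rightarrow$), I would assume the existence of $h^*\in\mathcal{H}$ with $h^*(\mathcal{C})=(y_1,\ldots,y_d)$. By the observation above, $L_{\mathcal{S}}(h^*)=0$. Since $L_{\mathcal{S}}$ takes values in $[0,1]$, zero is the global minimum, and therefore any element of $\arg\min_{h\in\mathcal{H}}L_{\mathcal{S}}(h)$ must also attain value $0$. Hence $L_{\mathcal{S}}(ERM_{\mathcal{H}}(\mathcal{S}))=0$, as required. One subtlety here is that $\arg\min$ may not be unique, so the argument should emphasize that \emph{every} minimizer achieves the infimum, not just a specific one.

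For the backward direction ($\Leftarrow$), I would let $h^*:=ERM_{\mathcal{H}}(\mathcal{S})$, which by definition lies in $\mathcal{H}$. The hypothesis $L_{\mathcal{S}}(h^*)=0$ combined with the observation above forces $h^*(c_i)=y_i$ for every $i$, i.e.\ $h^*(\mathcal{C})=(y_1,\ldots,y_d)$. This produces the required witness in $\mathcal{H}$.

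There is no real obstacle in this proof; the content is essentially unwinding definitions. The only delicate point worth making explicit is that the statement implicitly assumes the $\arg\min$ is non-empty (otherwise ``$ERM_{\mathcal{H}}(\mathcal{S})$'' is undefined). In a general setting with an arbitrary class $\mathcal{H}$, one should interpret ``$L_{\mathcal{S}}(ERM_{\mathcal{H}}(\mathcal{S}))=0$'' as ``$\inf_{h\in\mathcal{H}}L_{\mathcal{S}}(h)=0$ and this infimum is attained''; both directions then go through without modification, because the infimum of $L_{\mathcal{S}}$ on a finite sample takes only finitely many values in $\{0,\tfrac{1}{d},\tfrac{2}{d},\ldots,1\}$ and is therefore always attained when $\mathcal{H}\neq\emptyset$.
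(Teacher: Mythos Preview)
Your proposal is correct and follows essentially the same route as the paper: both directions are obtained by unwinding the definition of $L_{\mathcal{S}}$ as an average of non-negative $0$-$1$ losses, so that $L_{\mathcal{S}}(h)=0$ iff $h(c_i)=y_i$ for all $i$. If anything, you are slightly more careful than the paper, which in the forward direction writes ``$ERM_{\mathcal{H}}(\mathcal{S})=h^*$'' rather than merely noting that every minimizer achieves the value $0$; your handling of non-uniqueness and of attainment of the infimum is a welcome clarification.
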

\begin{proof}
\begin{itemize}
    \item $\Rightarrow |$ In fact, if there exists an $h^* \in \mathcal{H}$ such that $h^*(\mathcal{C})= h^*(\{c_1,...,c_d\})=(y_1,...,y_d)$ then $ ERM_\mathcal{H}(\mathcal{S})=h^*$ and , therefore ,  $  L_\mathcal{S}(ERM_\mathcal{H}(\mathcal{S}))= L_\mathcal{S}(h^*)=0 $.

\item $\Leftarrow |$ Let $h_0 = ERM_\mathcal{H}(\mathcal{S}) = \underset{{h\in \mathcal{H}}}{\arg\min(L_{\mathcal{S}} (h))}$ s.t: $\mathcal{S}=\{(c_1,y_1),...,(c_d,y_d)\}$.\\
Suppose that $L_\mathcal{S}(h_0) = 0$\\
Given that $ \begin{cases}
    L_\mathcal{S}(h_0) = \frac{1}{d} \sum_{i=1}^d l(h_0(c_i),y_i)\\
    \forall i \in \llbracket 1, d \rrbracket \; \quad l(h_0(c_i),y_i)  \geq 0 
\end{cases}
$\\

\text{ ( By definition, a loss function -in prediction problems \\ ( i.e. $\mathcal{Z}=\mathcal{X}\times \mathcal{Y }$)- is : $ l \text{ : } \mathcal{H} \times \mathcal{Z} \rightarrow \mathbb{R}_{+}$) }

Then $ \forall i \in \llbracket 1, d \rrbracket$ : $l(h_0(c_i),y_i)=0$.\\

Therefore : $ \forall i \in \llbracket 1, d \rrbracket$ : $h_0(c_i)=y_i$. Namely, there exists $h^*= h_0= ERM_\mathcal{H}(\mathcal{S})$ such that: $ \forall i \in \llbracket 1, d \rrbracket$ :$h^*(c_i)=y_i$

\end{itemize}
\end{proof}

\begin{corollary}
Let $\mathcal{H}$ be a hypothesis class for binary classification. Let $\mathcal{C}=\{c_1,...,c_d\}\subset \mathcal{X}$.\\
$\mathcal{H}$ shatters $\mathcal{C}$ if and only  if for every $d$-tuple $(y_1,...,y_d)$ in $\{0,1\}^{d}$,   the $ERM_\mathcal{H}$ over $\mathcal{S}=\{(c_1,y_1),...,(c_d,y_d)\}$ will give a zero empirical loss. Formally: $$\forall (y_1,...,y_d)\in \{0,1\}^d  : L_{\mathcal{S}}(ERM_{\mathcal{H}}(S))=0 , \text{where } \mathcal{S}=\{(c_1,y_1),...,(c_d,y_d)\} $$
\end{corollary}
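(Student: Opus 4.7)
The plan is to obtain the corollary as an immediate consequence of the Nechba--Mouhajir theorem of shattering, by quantifying the theorem's equivalence over every possible labeling $(y_1,\ldots,y_d) \in \{0,1\}^d$. The statement has exactly the logical shape ``a universal quantifier over labelings on each side of an iff,'' and since the theorem already provides the pointwise iff for a \emph{fixed} tuple, the corollary should follow by a straightforward quantifier manipulation.

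First, I would recall the characterization of shattering given in the Interpretation above: the inclusion $\{0,1\}^{|\mathcal{C}|} \subset \mathcal{H}_{\mathcal{C}}$ (the only nontrivial direction, since the reverse inclusion is automatic) is equivalent to the statement that for every $(y_1,\ldots,y_d) \in \{0,1\}^d$ there exists $h^* \in \mathcal{H}$ with $h^*(\mathcal{C}) = (y_1,\ldots,y_d)$. Thus shattering is, by definition, a universally quantified existence assertion indexed by $d$-tuples.

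Next, I would fix an arbitrary tuple $(y_1,\ldots,y_d) \in \{0,1\}^d$ and apply the Nechba--Mouhajir theorem to the sample $\mathcal{S} = \{(c_1,y_1),\ldots,(c_d,y_d)\}$. The theorem gives exactly the equivalence between the existence of $h^* \in \mathcal{H}$ with $h^*(\mathcal{C}) = (y_1,\ldots,y_d)$ and the condition $L_{\mathcal{S}}(ERM_{\mathcal{H}}(\mathcal{S})) = 0$. Because this equivalence holds for each fixed tuple independently, I can take the conjunction (equivalently, apply the universal quantifier $\forall (y_1,\ldots,y_d) \in \{0,1\}^d$) on both sides and obtain precisely the corollary's statement.

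There is no real obstacle here; the corollary is a direct reformulation of the theorem once shattering is unpacked into its per-labeling existence form. The only point worth stating explicitly in the write-up is that the equivalence in the theorem is uniform in $(y_1,\ldots,y_d)$, so the universal quantifier commutes with the biconditional and no additional argument is needed.
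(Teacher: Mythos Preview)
Your proposal is correct and follows essentially the same approach as the paper: unpack shattering as the universally quantified existence statement $\forall (y_1,\ldots,y_d)\in\{0,1\}^d\ \exists h^*\in\mathcal{H}:h^*(\mathcal{C})=(y_1,\ldots,y_d)$, then apply the Nechba--Mouhajir theorem pointwise for each fixed tuple to replace the existence condition by $L_{\mathcal{S}}(ERM_{\mathcal{H}}(\mathcal{S}))=0$. The paper's proof is identical in structure, merely written as a two-line chain of equivalences.
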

\begin{proof}
To prove this corollary we will exploit the previous Nechba-Mouhajir theorem of shattering :\\
\begin{align*}
    \mathcal{H} \text{ shatters } \mathcal{C}= 
\{c_1,...,c_d\} & \Leftrightarrow \text{ For every } (y_1,...,y_d)\in \{0,1\}^d \text{, 
 }\exists\; h^*\in \mathcal{H} \text{ s.t}: h^*(\mathcal{C})=(y_1,...,y_d)\\
    & \Leftrightarrow \text{ For every } (y_1,...,y_d)\in \{0,1\}^d, \text{ the } ERM_\mathcal{H} \text{ over } \mathcal{S}=\{(c_1,y_1),...,(c_d,y_d)\} \\ 
 \text{ gives a zero-Empirical Loss.}
\end{align*}

    \end{proof}

\subsection{Nechba-Mouhajir algorithm for shattering}

Now, we will be based on the previous corollary to construct our \textbf{Nechba-Mouhajir algorithm for shattering }:

\SetKwComment{Comment}{/* }{ */}
\RestyleAlgo{ruled} 
\begin{algorithm}
\label{algo_1:Nechba-Mouhajir algorithm for shattering}
\caption{Nechba-Mouhajir algorithm for shattering}
\KwData{
 \begin{itemize}
     \item Hypothesis class $\mathcal{H}$
     \item domain set $\mathcal{C} \subset \mathcal{X}$, (such that $|\mathcal{C}|=d\in \mathbb{N}^*$)
 \end{itemize}
}

\KwResult{
$
\begin{cases}
    \text{True} & \text{if } \mathcal{H} \text{ shatters } \mathcal{C} \\
    \text{False} & \text{if } \mathcal{H} \text{ doesn't shatter } \mathcal{C} 
\end{cases}
$
}

- $All\_combinations \gets $ all d-tuple of $\{0,1\}^{d}$\\

\For{$(y_1,...,y_d)$ in $All\_combinations$}{

\If{$L_\mathcal{S}(ERM_\mathcal{H}(\mathcal{S}))\neq 0$ }{\textbf{Return False}
}

}
\textbf{Return True}

\end{algorithm}

 For a given hypothesis class $\mathcal{H}$ and a given domain set $\mathcal{C} \subset \mathcal{X}$ ( whose size is some $d$), we will generate all possible d-tuples of $\{0,1\}^{d}$. Then, for each d-tuple $(y_1,...,y_d)$, we will consider our sample $\mathcal{S}$ to be the superposition of $\mathcal{C}$ and $(y_1,...,y_d)$. Formally, if we consider $\mathcal{C}=\{c_1,...,c_d\}$ our sample $\mathcal{S}$ will be :
 $$ \mathcal{S} = \{(c_1,y_1),(c_2,y_2),...,(c_d,y_d)\} $$
 
 Then, we will run the ERM rule for $\mathcal{H}$ and over $\mathcal{S}$, and compute the empirical loss of $ERM_\mathcal{H}(\mathcal{S})$ (i.e. $L_\mathcal{S}(ERM_\mathcal{H}(\mathcal{S}))$ ). Then, according to our previous corollary, if we find that $L_\mathcal{S}(ERM_\mathcal{H}(\mathcal{S})) \neq 0 $ we will quit and say that "$\mathcal{H}$ doesn't shatter $\mathcal{C}$". Otherwise, we will continue until we find some adversarial d-tuple (that is, for which $L_\mathcal{S}(ERM_\mathcal{H}(\mathcal{S})) \neq 0 $ ), or we could not find such a d-tuple. In such a case, we can say that our $\mathcal{H}$ shatters $\mathcal{C}$.

\subsection{Overcoming the Practical Issues of the Algorithm}

The proposed algorithm for assessing the shattering property has some practical limitations that need to be addressed in future research. One such limitation is the brute-forcing problem that arises from generating all possible d-tuples of $\{0,1\}^d$. If an adversarial d-tuple is found early in the process, the algorithm can terminate without examining all $2^d$ cases. However, if an adversarial d-tuple is not found, the algorithm will be compelled to check all $2^d$ cases, leading to computational inefficiencies. Alternative methods for generating d-tuples more efficiently can be explored in future work to overcome this brute-force issue.

Another practical challenge is the computational time required by the algorithm, particularly for large values of d. A possible solution to address this issue is to leverage the power of GPUs by assigning a single d-tuple to each of the $2^d$ threads and enabling parallel execution, which can significantly reduce the computation time.  For example, using an NVIDIA GeForce RTX 3080 can allow for nearly $2^{13}$ parallel operations (see NVIDIA's website \cite{nvidia_GeForce_RTX_3080} for more details).  However, for large d, the number of threads required for parallelization may become infeasible. Therefore, future work should investigate alternative approaches to optimize the computational time of the algorithm.

\section{Mouhajir-Nechba Algorithm of VC dimension and its Theoretical Basis}

\subsection{Preliminaries}
Let $ d \in \mathbb{N}^{*}$,  $\mathcal{Y}=\{0,1\}$ and 
\begin{align*}
\mathcal{H} &\coloneqq \Set{h}{h\colon \mathcal{X}\to  \mathcal{Y}= \{0,1\},\quad x \mapsto h(x)} 
\end{align*}

\noindent
Let $Z$ be a random variable, such that :

\begin{equation} \label{eq1}
\begin{split}
Z  & = \text{`` VCdim($\mathcal{H}$) $< d$ ''} \\
             & =  \text{`` $\mathcal{H}$ doesn't shatter any domain set $\mathcal{C}\subset \mathcal{X}$ s.t : $ \mid \mathcal{C}  \mid = d$ "}
\end{split}
\end{equation}

\noindent
The random variable $Z$ has two possible outcomes (i.e. $Z (\Omega) = \{0,1\}$ ) :

$$
Z = \begin{cases}

    1 & \text{ if `` VCdim($\mathcal{H}$) $< d$ ''}= \text{`` For every domain set $\mathcal{C}\subset \mathcal{X}$ s.t : $ \mid \mathcal{C}  \mid = d$ , $\mathcal{H}$ doesn't shatter $\mathcal{C}$ "}\\

    
    0 & \text{ if `` VCdim($\mathcal{H}$) $\geq d$ ''}= \text{`` There exist at least one domain set $\mathcal{C}_{0}\subset \mathcal{X}$ s.t : }
       \text{  $ \mid \mathcal{C}_{0} \mid = d$ and $\mathcal{H}$ shatters $\mathcal{C}_{0}$ "}   
\end{cases}
$$
Let $ \mathbf{p}$ the probability of event ``$Z = 1 $ " , namely $ \mathbf{p = \mathbb{P} ( Z= 1 )}$ . Therefore , $Z$ is a Bernoulli random variable : $Z \sim \mathcal{B}(\mathbf{p})$

\subsection{The estimation of Bernoulli parameter $\mathbf{p}$}

So far, we have proved that $Z \sim \mathcal{B}(\mathbf{p})$. But in order to estimate the parameter $\mathbf{p}$ we will follow the inferential statistics workflow :

\begin{enumerate}
    \item We will take $m$ samples (i.i.d) : $( \mathcal{C}_{1},\mathcal{C}_{2},...,\mathcal{C}_{m} )$ , s.t : $$ \forall i \in \Iintv{1,m} :   \mathcal{C}_{i} \subset \mathcal{X} \quad \text{and} \mid \mathcal{C}_{i}\mid = d $$
    
    \item This is equivalent of taking m (i.i.d) random variables $( Z_{1},Z_{2},..,Z_m)$ that correspond to the realization of $Z$ on the corresponding set $\mathcal{C}_{i}$, for all $i$ in $\Iintv{1,m}$. Formally, we will have $( Z_{1},Z_{2},..,Z_m)$ ,s.t:
       \begin{enumerate}
       \item \text{$Z_{i}$ = `` the realization of $Z$ upon $\mathcal{C}_{i}$ " = `` $\mathcal{H}$ doesn't shatter $\mathcal{C}_{i}$ " }
        \item $Z_{i}=\begin{cases}
            1 &  \text{if } \mathcal{H} \text{ doesn't shatter } \mathcal{C}_{i}    \\
            0 &  \text{if } \mathcal{H} \text{ shatters } \mathcal{C}_{i} 
        \end{cases}$
        \end{enumerate}
        
    \item Thus, according to the law of large numbers: 
    \[ 
    \overline{Z_{m}} = \frac{1}{m} \times \sum_{i=1}^{m} Z_{i} \quad 
    \approx \quad (\mathbf{p}=\mathbb{E}(Z)) \]
\end{enumerate}
\begin{interpretation}
\begin{itemize}
\item $\overline{Z_{m}} $ : is the frequency of the favorable event (i.e.\text{ `` $Z_{i} = 1 $"}  ). Namely  $\overline{Z_{m}} $ is equal to how much time does $\mathcal{H}$ not shatter a set $\mathcal{C}\subset \mathcal{X} $ of size $d$ for our given samples $(\mathcal{C}_{1},\mathcal{C}_{2},...,\mathcal{C}_{m})$.

\item $\overline{Z_{m}} = 1$ : $\mathcal{H}$ doesn't shatter any of sets $\mathcal{C}_{i}$ in our given sample $(\mathcal{C}_{1},\mathcal{C}_{2},...,\mathcal{C}_{m})$.  

\item  $\overline{Z_{m}} \neq 1$ :  there exists at least one set $\mathcal{C}_{0}$ in our sample $(\mathcal{C}_{1},\mathcal{C}_{2},...,\mathcal{C}_{m})$ such that $\mathcal{H}$ shatters $\mathcal{C}_{0}$ . 
\end{itemize}

\end{interpretation}

\subsection{How to choose the number of samples (i.e. $m$) so that $\overline{Z_{m}}$ would be Probably Approximately equal to $\mathbf{p}$ }

Beforehand, we will give a recall of the Hoeffding's inequality \cite{David14} : 

\begin{lemma}[Hoeffding's inequality]
Let $\theta_{1},\theta_{2},...,\theta_{m}$ be a sequence of i.i.d random variables , and assume that for all $i \in \Iintv{1,m}$ , $\mathbb{E}(\theta_{i}) = \mu $ and $\mathbb{P} (a\leq \theta_{i} \leq b) =1$. Then, for any $ \epsilon > 0$:

\[    
  \mathbb{P}( \quad \mid \mu - \frac{1}{m} \times \sum_{i=1}^{m} \theta_{i}    \mid > \epsilon \quad ) \leq 2  \exp{\frac{-2m \epsilon^{2}}{ (b-a)^2 }}
\]

\end{lemma}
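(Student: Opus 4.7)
The plan is to apply the Chernoff bounding technique together with Hoeffding's lemma. First, I would center the variables by setting $X_i = \theta_i - \mu$, so that each $X_i$ has zero mean and lies almost surely in $[a-\mu,\, b-\mu]$, an interval of length $b-a$. The event appearing in the statement splits by a union bound into the two one-sided deviations of $\frac{1}{m}\sum_i X_i$ from $0$, so it suffices to prove an upper tail bound of the form $\exp(-2m\epsilon^2/(b-a)^2)$; applying the same argument to $-X_i$ handles the lower tail and produces the factor $2$ in the conclusion.

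For the one-sided bound, Markov's inequality applied to the exponential together with independence gives, for every $s > 0$,
\[
\mathbb{P}\!\left(\sum_{i=1}^m X_i > m\epsilon\right) \;\leq\; e^{-sm\epsilon}\prod_{i=1}^m \mathbb{E}\bigl[e^{sX_i}\bigr].
\]
The central ingredient I would then establish is Hoeffding's lemma: for a zero-mean random variable $X$ supported in $[\alpha, \beta]$, $\mathbb{E}[e^{sX}] \leq \exp(s^2(\beta-\alpha)^2/8)$. To obtain it, I would use convexity of $x \mapsto e^{sx}$ on $[\alpha, \beta]$ to write $e^{sX} \leq \frac{\beta - X}{\beta-\alpha}\,e^{s\alpha} + \frac{X-\alpha}{\beta-\alpha}\,e^{s\beta}$, take expectations using $\mathbb{E}[X]=0$, and then analyze $\psi(s) = \log\!\left(\tfrac{\beta}{\beta-\alpha}e^{s\alpha} + \tfrac{-\alpha}{\beta-\alpha}e^{s\beta}\right)$, checking $\psi(0)=\psi'(0)=0$ and $\psi''(s) \leq (\beta-\alpha)^2/4$ uniformly in $s$; a second-order Taylor expansion then yields $\psi(s) \leq s^2(\beta-\alpha)^2/8$.

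Combining Hoeffding's lemma with the Chernoff bound produces the upper bound $\exp(-sm\epsilon + ms^2(b-a)^2/8)$. Optimizing over $s > 0$, the minimum is attained at $s = 4\epsilon/(b-a)^2$ and equals $\exp(-2m\epsilon^2/(b-a)^2)$; multiplying by $2$ from the union bound over the two tails gives the claim.

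The main obstacle is the derivation of Hoeffding's lemma, specifically the uniform bound $\psi''(s) \leq (\beta-\alpha)^2/4$. Interpreting $\psi''(s)$ as the variance of a tilted Bernoulli random variable on $\{\alpha,\beta\}$ reduces the task to showing $u(1-u)(\beta-\alpha)^2 \leq (\beta-\alpha)^2/4$ for some $u \in [0,1]$, which is elementary but is the single non-mechanical step of the argument. Everything else---Markov's inequality, convexity of the exponential, optimization in $s$, and the union bound for the two-sided statement---is routine.
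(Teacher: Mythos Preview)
Your argument is correct and is the standard textbook derivation of Hoeffding's inequality: center the variables, apply the Chernoff method, invoke Hoeffding's lemma via the convexity-plus-Taylor argument on the log-moment-generating function, optimize in $s$, and finish with a union bound over the two tails. The only delicate point you flag, namely the uniform bound $\psi''(s)\leq(\beta-\alpha)^2/4$, is indeed handled cleanly by recognizing $\psi''(s)$ as the variance of a two-point distribution on $\{\alpha,\beta\}$, whence it is at most $(\beta-\alpha)^2/4$.

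There is, however, nothing to compare against in the paper: the lemma is stated there as a \emph{recall} of a classical result and is not proved; the authors simply cite the textbook of Shalev-Shwartz and Ben-David \cite{David14} and then specialize it to their Bernoulli setting. So your proposal goes strictly beyond what the paper does, supplying exactly the proof one would find in that reference.
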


As we can see, the Hoeffding's inequality determines, more precisly than the law of large numbers, the convergence rate. By  applying Hoeffding's 
 inequality in our case, we get : 

\begin{equation}\label{Hoeffding's inequality for our case study}
  \mathbb{P}( \quad \mid \mathbf{p} - \frac{1}{m} \times \sum_{i=1}^{m} Z_{i}   \mid  > \epsilon \quad ) \leq 2  e^{-2m \epsilon^{2}}   
\end{equation}

Now, from this equation (\ref{Hoeffding's inequality for our case study}) we can determine ( with respect to the sample size $m$ ) the precision $\epsilon$ and the confidence $(1-\delta = 1-e^{-2m \epsilon^{2}} )$ that we want so as to make the approximation : $ \overline{Z_{m}} \approx \mathbf{p}$ .

In fact, in order to have the precision $\epsilon$ and confidence $(1-\delta)$, we should sample  ($ \mathbf{m= \frac{1}{2\epsilon^{2}} \times \ln{\frac{2}{\delta}} } $) i.i.d sets $\mathcal{C}_{i} $ of size $d$. Then, we can say, with confidence $ (1-\delta)$ and precision $\epsilon$, that \text{ `` $ \overline{Z_{m}} =  \mathbf{p}$ " }.

\begin{interpretation}
 \begin{itemize}
 
    \item If we find that \text{ `` $ \overline{Z_{m}} = 1$ " }. Then, Probably Approximately, \text{ `` $ \mathbf{p} = 1$ "}.\\
    In other words, we can say, with confidence $ (1-\delta)$ and precision $\epsilon$, that : \text{ `` $\mathcal{H}$ doesn't shatter any set $\mathcal{C} \subset \mathcal{X} $ s.t : $ \mid \mathcal{C} \mid=d$  " } which is equivalent of saying \text{ ``VCdim$ (\mathcal{H}) < d$ " }.
    
    \item If we find that \text{ `` $ \overline{Z_{m}} \neq 1$ " }. Then , Probably Approximately, \text{ `` $ \mathbf{p} \neq 1$ "}.\\
    In other words, we can say, with confidence $ (1-\delta)$ and precision $\epsilon$, that : \text{ `` There exists at least one set $\mathcal{C}_{0} \subset \mathcal{X}$ s.t $\mid \mathbf{C} \mid = d $ and $\mathcal{H}$ shatters set $\mathcal{C}_{0}$" }  which is equivalent of saying \text{ ``VCdim$ (\mathcal{H}) \geq d$ " }.
 \end{itemize}
\end{interpretation}

\subsection{Mouhajir-Nechba algorithm for VC-dimension}

Now, after presenting all the theoretical bases, we will unravel our algorithm to compute the Vapnik-Chervonenkis dimension :

\begin{algorithm}
\caption{Mouhajir-Nechba algorithm for VCdimension}\label{alg:two}
\KwData{
 \begin{itemize}
    \item Hypothesis class $\mathcal{H}$
     \item the accuracy $\epsilon$
     \item the confidence parameter $(1-\delta)$
     \item $d_{MAX}$ :  a large number (for example $2^{63}-1$) used to say that $\mathcal{H}$ has infinite VCdim
 \end{itemize}
}

\KwResult{
\[\begin{cases}
    \text{some } d \neq d_{MAX} & \text{if } \mathcal{H} \text{ has finite VCdimension}  \\
d = d_{MAX} & \text{if } \mathcal{H} \text{ has infinite VCdimension}
\end{cases}
\]
}

- Compute the sample size $\mathbf{m}$ ( $ \mathbf{m \gets \frac{1}{2\epsilon^{2}} \times \ln{\frac{2}{\delta}} } $ );\\

\For{$d=1,2,...d_{MAX}$}{
   $\text{Generate an } \mathbf{m} \text{ random samples of size } \mathbf{d} \text{ from } \mathcal{X} \text{ : } \{\mathcal{C}_{1},\mathcal{C}_{2},...,\mathcal{C}_{m} \}  $

   $ Z_{m} \gets 0 $

   \For{$i=1,2,...,m$}{
      \If{N-M\_Shattering($\mathcal{H}$,$\mathcal{C}_{i}$) == False}{
      $Z_{m} \gets Z_{m}+1 $
      }        
   }

   \If{$Z_{m} == m$}{
       \textbf{Return $d-1$}
   }   
}
\textbf{Return $+\infty$}
\end{algorithm}

After presenting all the theoretical foundations, we proceed to unveil our algorithm for computing the Vapnik-Chervonenkis (VC) dimension. 

For a given hypothesis class $H$, accuracy $\epsilon$, confidence $(1-\delta)$, and a large number $d_{\text{MAX}}$ (which reflects the case where the dimension of VC is infinite), we compute the sample size $m$ needed to assert that the shattering property evaluated in each iteration of our algorithm is Probably Approximately Correct (PAC) with probability $(1-\delta)$.

For each size $d$ in the range $[1, d_{\text{MAX}}]$, we generate $m$ random samples from the domain set $X$ with size $d$ (ie ${C_1, C_2, ..., C_m}$ where $C_i \subseteq X$ and $|C_i| = d$). Then, for each $C_i$, we apply the \textbf{Nechba-Mouhajir algorithm for shattering}\ref{algo_1:Nechba-Mouhajir algorithm for shattering} (which will be refreed as \textbf{N-M\_Shattering} ) to determine if $H$ shatters $C_i$. If $H$ does not shatter $C_i$, we increase $Z_m$ by 1. We repeat this process for all $C_i$ to compute the final $Z_m$, which probably and approximately, according to Hoeffding's inequality \ref{Hoeffding's inequality for our case study}, represents the probability of having the VC dimension less than $d$ (i.e., the probability of the event $\text{VCdim}(H) < d$). At this stage, we check if $Z_m = 1$, in which case we terminate the algorithm and output $\text{VC}(H) = d-1$. Otherwise, we perform the same process for the next iteration $d+1$, until we reach $d_{\text{MAX}}$, in which case we output $\text{VC}(H) = \infty$.

\begin{remark}
    It is complicated to calculate the error probability of Algorithm 2 \ref{alg:two}, but we can make this probability small by choosing very small values for $\epsilon$
and $\delta$.
\end{remark}

\subsection{The complexity of our algorithm}

To determine the complexity of our algorithm we should first determine the complexity of \textbf{N-M\_Sattering($\mathcal{H},\mathcal{C}$)} :

\begin{enumerate}
    \item \textbf{All\_combinations} is of size $2^{d}$
    \item The complexity of each iteration in the for-loop is the complexity of finding the output of the $ERM_H(S)$ and then computing its Emprirical loss $L_S$. We will denote it as follow :
     $$t = \mathcal{O}(L_{S}(ERM_{\mathcal{H}}(\mathcal{S})))$$ 
     
     \item in conclusion, the complexity of N-M\_Sattering($\mathcal{H},\mathcal{C}$) is : 
\[
\mathcal{O} ( 2^{d} \times t )
\]
    
\end{enumerate}

Now , we could determine the complexity of our \textbf{Mouhajir-Nechba algorithm for VC dimension} :
 \begin{enumerate}
     \item In worse case (i.e. if $\mathcal{H}$ has infinite dimension) we will loop over $d_{MAX}$ iterations.
     \item For each iteration, we will have a for-loop of complexity : \[
\mathcal{O} ( m\times 2^{d} \times t )
\]
     \item Thus, over all $d_{MAX}$ iterations the complexity will be :
      \[
\mathcal{O} ( \sum_{d=1}^{d_{MAX}} m\times 2^{d} \times t ) = \mathcal{O} ( m \sum_{d=1}^{d_{MAX}}  2^{d} \times t )
\]
     
 \end{enumerate}

\section{Experimentation} 
The objective of this section is to showcase the efficiency of our algorithm in computing the VC dimension and its applicability in real-world scenarios. To achieve this goal, we will utilize the \textbf{Mouhajir-Nechba algorithm for VC dimension} on pre-calculated mathematical classes. Specifically, we will focus on the Half-Spaces class as a case study.

\subsection{\textbf{Theoretical results}}

\par The VC dimension of half-spaces is formally established by Shalev-Shwartz and Ben-David in their book 'Understanding Machine Learning: From Theory to Algorithms'. Specifically, in Chapter 9, Section 9.1.3 titled 'The VC Dimension of Half-spaces' \cite{David14}, they prove the following theorem:

\begin{theorem}
The VC dimension of the class of nonhomogenous Half-Spaces in
$\mathbb{R}^d $ is $d + 1$.
\end{theorem}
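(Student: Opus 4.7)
The plan is to prove the equality $\text{VCdim} = d+1$ by separately establishing the two matching inequalities: a lower bound via an explicit shattered set of size $d+1$, and an upper bound via a geometric argument showing no set of size $d+2$ can be shattered. These two parts are independent and can be carried out in either order.

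For the lower bound, I would exhibit the set $\mathcal{C} = \{0, e_1, e_2, \ldots, e_d\} \subset \mathbb{R}^d$, where $e_i$ is the $i$-th standard basis vector, and show it is shattered. Given any target labeling $(y_0, y_1, \ldots, y_d) \in \{0,1\}^{d+1}$, I would translate to $s_i = 2y_i - 1 \in \{-1, +1\}$ and exhibit the half-space with normal $w = (s_1, \ldots, s_d)$ and bias $b = s_0/2$. A direct verification shows $\mathrm{sign}(\langle w, 0\rangle + b) = s_0$ and $\mathrm{sign}(\langle w, e_i \rangle + b) = \mathrm{sign}(s_i + s_0/2) = s_i$, since $|s_i| = 1$ dominates $|s_0/2| = 1/2$. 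Hence every labeling in $\{0,1\}^{d+1}$ is realized, so $\mathcal{C}$ is shattered and $\text{VCdim} \geq d+1$.

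For the upper bound, I would invoke Radon's theorem from convex geometry: any set of $d+2$ points in $\mathbb{R}^d$ admits a partition into two disjoint nonempty subsets whose convex hulls intersect. Let $\mathcal{C} = \{x_1, \ldots, x_{d+2}\}$ be an arbitrary candidate set, and let $\mathcal{C} = A \sqcup B$ be the Radon partition, with a common point $p \in \mathrm{conv}(A) \cap \mathrm{conv}(B)$ expressible as convex combinations of the $x_i \in A$ and of the $x_j \in B$ separately. I would then consider the adversarial labeling $y_i = 1$ if $x_i \in A$ and $y_j = 0$ if $x_j \in B$. If some half-space $\{x : \langle w, x\rangle + b \geq 0\}$ realized this labeling, taking the two convex combinations that represent $p$ would yield simultaneously $\langle w, p\rangle + b \geq 0$ and $\langle w, p\rangle + b < 0$, a contradiction. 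Hence $\mathcal{C}$ is not shattered, so $\text{VCdim} < d+2$.

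The main obstacle is the upper bound, since it relies on Radon's theorem, an external fact from combinatorial convex geometry not proved in this paper; I would include a brief citation or one-line reminder of its statement rather than reprove it. The lower bound, by contrast, is a routine verification once the right shattered set has been guessed. Combining the two bounds gives $\text{VCdim} = d+1$ exactly.
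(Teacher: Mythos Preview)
Your argument is correct, but note that the paper does not actually supply its own proof of this theorem: it simply quotes the result from Shalev-Shwartz and Ben-David \cite{David14}, Section~9.1.3, and uses it as a benchmark for the experiments. Your two-part proof --- shattering $\{0,e_1,\ldots,e_d\}$ by hand for the lower bound, and invoking Radon's theorem for the upper bound --- is precisely the standard textbook argument given in that cited reference, so in substance you have reproduced exactly the proof the paper is pointing to.
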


\subsection{\textbf{Practical considerations}}

The major challenge in our experiment is to represent the ERM of Half-Spaces. Therefore, we will use Perceptron \cite{Rosenblatt58} as a specific implementation of this paradigm .

According to the work of Shalev-Shwartz and Ben-David in their book "Understanding Machine Learning: From Theory to Algorithms"\cite{David14}, the ERM rule for the class of half-Spaces can be represented in multiple ways, not just through the use of the Perceptron algorithm. In particular, sections 9.1.1 and 9.1.2 of their book describe alternative methods  :
  
  \begin{itemize}
      \item \textbf{For the realizable case}  solving the following linear program  : 
            \begin{align*}
                \max_x \quad u^T x \\
            Ax &\geq v \\
            \end{align*}
            where
            \begin{itemize}
                \item $u\in \mathbb{R}^d $  is a "dummy" vector : $u=(0,\ldots,0)$ (since every $w \in \mathbb{R}^d $ that satisfies the constraint is a solution to the ERM rule).
                \item $A$ is a matrix whose elements are $A_{i,j} = y_{i} x_{i,j}$, where $x_{i,j}$ is the $j^{\text{th}}$ component of the $i^{\text{th}}$ example within the sample $S=\{(x_1,y_1),\ldots,(x_m, y_m)$.
                \item $v=(1,\ldots,1) \in \mathbb{R}^m $ .
            \end{itemize}

        \item \textbf{For the Agnostic case}: solving the non-convex optimization problem:
\[h^{*}=\operatorname*{argmin}_{h}(L_{\mathcal{S}}(h)),\]
whereas, the major problem here is to handle the non-convexity of 0-1 loss: $l((x,h(x)),y)=\mathbb{1}_{\left(h(x)\neq y\right)}$. For this purpose, we can use the sub-gradient method.

  \end{itemize}

\subsection{Mouhajir-Nechba Algorithm for VC Dimension as a Tool for Evaluating ERM Implementation}

Using the \textbf{Mouhajir-Nechba algorithm for VC dimension}, we propose a methodology to assess the performance of a specific implementation of the ERM rule for a hypothesis class $\mathcal{H}$. 

Specifically, we determine the VC dimension of $\mathcal{H}$ under the given implementation and compare it with the theoretical results. If the results align within a certain tolerance level, we conclude that the implementation is good.

For instance, for the class of Half-Spaces, the Perceptron implementation is considered good if the Mouhajir-Nechba algorithm yields the same VC dimension as expected theoretically with a specified level of confidence.

\subsection{\textbf{Experimental results}}

\par The experiment that we have conducted, for ($\epsilon=10^{-2}$ and $\delta=10^{-2}$ ) has given the following results for both execution time and VC dimension :

\begin{table}[!h]
    \centering

\caption{VC-dimension of perceptron for dimension between 1 and 6 \label{tabularx}}
    \begin{tabularx}{1.12\textwidth} { 
  | >{\raggedright\arraybackslash}X 
  | >{\centering\arraybackslash}X
  | >{\centering\arraybackslash}X
  | >{\centering\arraybackslash}X
  | >{\centering\arraybackslash}X
  | >{\centering\arraybackslash}X
  | >{\raggedleft\arraybackslash}X | }
 \hline
 $dim(\mathcal{X})$ & 1 & 2 & 3 & 4 & 5 & 6  \\
 \hline
 VC-dimension  &2 & 3 & 4 & 5 & 6  &7  \\
\hline
execution time (seconds)  &34.76 & 46.78 & 70.12 & 98.64 & 103.53  &124.22\\
\hline
\end{tabularx}
 
\newpage

\end{table}
\begin{figure}[!h]
    \centering
    \includegraphics[scale=0.46]{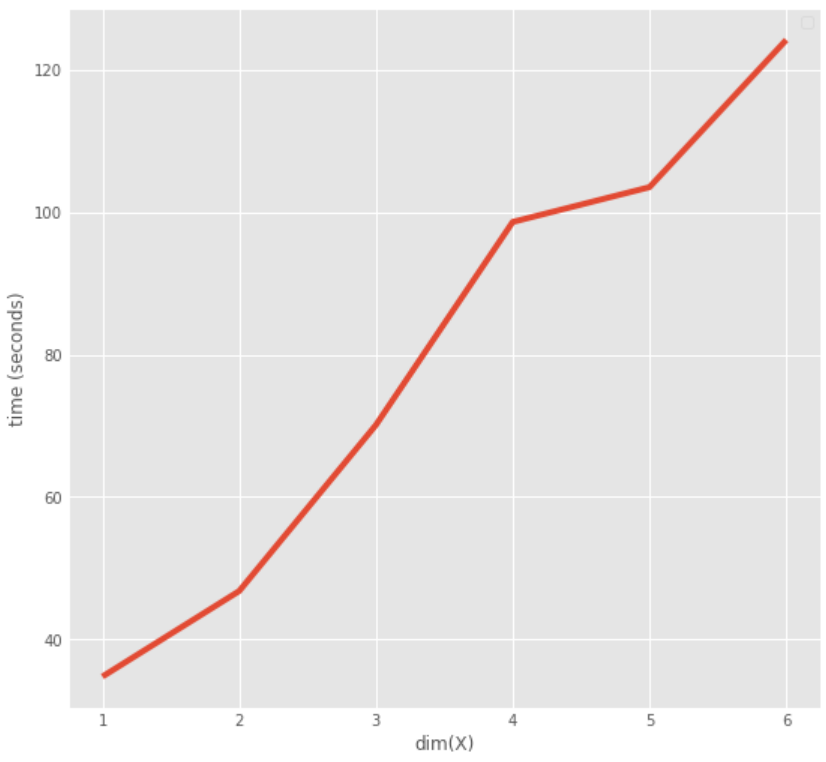}
    \caption{The computing time for Perceptron}
\end{figure}

The findings of our experiment can be summarized by two key observations. Firstly, the output of our algorithm for the VC dimension is in perfect alignment with the theoretical results. This indicates that our algorithm produces accurate and dependable outcomes, and that the Perceptron serves as a reliable implementation of the ERM principle for the half-spaces class. Secondly, we observed that the computational time required by our algorithm increases as the dimension of the domain set $\mathcal{X}$ increases. This highlights the computational complexity of our algorithm, which can pose a significant practical challenge. Nevertheless, we are confident that this issue can be overcome by developing a PRAM version of our algorithm that harnesses the processing power of acceleration devices.

\section{Summary and future work }

This paper addresses a fundamental challenge in the field of Machine Learning: computing the VC dimension for binary classification concept classes beyond the traditional discrete settings. To achieve this goal, we proposed a novel characterization of the shattering property based on the Empirical Risk Minimization paradigm. Our algorithm, which is built upon this characterization, can handle the problem of VC dimension without any constraints on the concept class or domain set.

However, the computational time required for our algorithm is an area of concern, which can be mitigated to some extent by harnessing the computing power of accelerators such as GPUs. Additionally, there may be challenges with our algorithm if the probability of shattered sets is extremely low or if there are issues with finite-precision arithmetic approximating concepts defined in terms of irrational numbers.

\bibliographystyle{plain}
\bibliography{sample-base}

\end{document}